\documentclass[letterpaper,10pt,conference]{ieeeconf}

\usepackage{cite}
\usepackage{graphicx}
\usepackage{tikz}
\usepackage{subcaption}
\usepackage[cmex10]{amsmath}
\interdisplaylinepenalty=2500
\usepackage{algpseudocode}
\usepackage{algorithmicx} 
\usepackage{array}
\usepackage{multirow}
\usepackage{dblfloatfix}


\usepackage{epsfig}
\usepackage{amsfonts,amssymb,multirow,bigstrut,booktabs,ctable,latexsym}

\usepackage[mathscr]{eucal}
\usepackage{verbatim}

\usepackage{amsthm}
\usepackage{algorithm}

\usepackage{stmaryrd}

\hyphenation{op-tical net-works semi-conduc-tor}
\IEEEoverridecommandlockouts  
\pdfminorversion=4

\begin{document}

\newtheorem{definition}{Definition}
\newtheorem{lemma}{Lemma}
\newtheorem{corollary}{Corollary}
\newtheorem{theorem}{Theorem}
\newtheorem{example}{Example}
\newtheorem{proposition}{Proposition}
\newtheorem{remark}{Remark}
\newtheorem{assumption}{Assumption}
\newtheorem{corrolary}{Corrolary}
\newtheorem{property}{Property}
\newtheorem{ex}{EX}
\newtheorem{problem}{Problem}
\newcommand{\argmin}{\arg\!\min}
\newcommand{\argmax}{\arg\!\max}
\newcommand{\st}{\text{s.t.}}
\newcommand \dd[1]  { \,\textrm d{#1}  }

\makeatother

\title{\Large\bf An Analytical Framework for Control Synthesis of Cyber-Physical Systems with Safety Guarantee}

\author{Luyao Niu$^{1*}$, Abdullah Al Maruf$^{2*}$, Andrew Clark$^1$, J. Sukarno Mertoguno$^3$, and Radha Poovendran$^2$ %
\thanks{*Authors contributed equally to this work.}
\thanks{$^1$Luyao Niu and Andrew Clark are with the Department of Electrical and Computer Engineering, Worcester Polytechnic Institute, Worcester, MA 01609
{\tt\small \{lniu,aclark\}@wpi.edu}}
\thanks{$^2$Abdullah Al Maruf and Radha Poovendran are with the Network Security Lab, Department of Electrical and Computer Engineering,
University of Washington, Seattle, WA 98195-2500
        {\tt\small \{maruf3e,rp3\}@uw.edu}}%
\thanks{$^3$J. Sukarno Mertoguno is with Information and Cyber Sciences Directorate, Georgia Tech Research Institute, Atlanta, GA 30332
{\tt\small \{karno\}@gatech.edu}}
 }
\thispagestyle{empty}
\pagestyle{empty}

\maketitle

\begin{abstract}
Cyber-physical systems (CPS) are required to operate safely under fault and malicious attacks. The simplex architecture and the recently proposed cyber resilient architectures, e.g., Byzantine fault tolerant++ (BFT++),  provide safety for CPS under faults and malicious cyber attacks, respectively. However, these existing architectures make use of different timing parameters and implementations to provide safety, and are seemingly unrelated. In this paper, we propose an analytical framework to represent the simplex, BFT++ and other practical cyber resilient architectures (CRAs). We construct a hybrid system that models CPS adopting any of these architectures. We derive sufficient conditions via our proposed framework under which a control policy is guaranteed to be safe. We present an algorithm to synthesize the control policy. We validate the proposed framework using a case study on lateral control of a Boeing 747, and demonstrate that our proposed approach ensures safety of the system.


\end{abstract}
\section{Introduction}\label{sec:intro}
Cyber-physical systems (CPS) are subject to random failures and malicious cyber attacks, which have been reported in applications such as transportation \cite{Jeep} and power system \cite{case2016analysis}. Failures and attacks can potentially cause safety violation of the physical components, which leads to severe harm to the plants and humans. 


Fault tolerant control schemes \cite{zhang2008bibliographical,sharifi2010fault,xu2020distributed} and architectures such as simplex \cite{sha2001using,bak2009system,mohan2013s3a} have been proposed to address random failures. These approaches are effective in CPS when some components are verified to be fault-free. This requirement, however, may not be viable for all CPS, especially those subject to malicious attacks.


A malicious adversary can exploit the vulnerabilities in the cyber subsystem and intrude into CPS. The adversary can then cause common mode failures across different components, rendering fault-tolerant schemes designed for random failures inadequate. 
A seminal work recently proposed a cyber resilient architecture, named Byzantine fault tolerant++ (BFT++) \cite{mertoguno2019physics}, for CPS under malicious cyber attacks. 
BFT++, which is applied to CPS with redundant controllers, uses one of the controllers as backup. The other controllers are engineered to crash upon malicious attack, which triggers automatic and fast controller recovery using the backup. If the controllers are restored in time, then the system can guarantee safety.
This architecture exploits the fact that the cyber subsystem operates on a shorter timescale than the physical subsystem, which has an inherent natural resilience from the physical dynamics against limited cyber disruptions.


Following BFT++, several other unpublished yet effective approaches have appeared with different implementations of recovery and backup. 
In parallel, alternative approaches are proposed in \cite{arroyo2019yolo,arroyo2017fired} for CPS without redundancy. The controller in these approaches is programmed to restart proactively or periodically to recover the system from malicious attack. The physical subsystem can then maintain safety by utilizing its natural resilience and tuning the controller availability.


The aforementioned architectures \cite{mertoguno2019physics,arroyo2019yolo,arroyo2017fired}, which we collectively refer as cyber resilient architectures (CRAs) have found successful applications in different CPS.
While the CRAs can independently provide safety guarantees, the analyses undertaken are distinct and specific to the systems or architectures. Hence these analyses may not be readily extended from one CPS to another.  
Therefore a common framework which allows a general method of analysis for these seemingly unrelated yet novel architectures is of key interest. Such a framework will also enable comparison among different architectures under a common baseline. Currently, such an analytical framework does not exist.

In this paper, we propose a common framework that models the simplex architecture and the CRAs. We then present a control policy synthesis with safety guarantee using our proposed framework, which applies to any of these architectures. 
We make the following specific contributions:
\begin{itemize}
    \item We construct a hybrid system to model CPS implementing the simplex architecture and CRAs. We propose a common framework that captures these architectures.
    \item We derive the sufficient conditions for a control policy to satisfy safety with respect to any specified budget. 
    \item We propose an algorithm to compute a control policy that satisfies our derived conditions. Our proposed algorithm converges to a feasible solution, given its existence, within finite number of iterations.
    \item  We validate our proposed approach using a case study on lateral control of a Boeing 747. We show that our proposed approach guarantees the safety of Boeing 747 with respect to the given budget constraint.
\end{itemize}

The reminder of the paper is organized as follows. Section \ref{sec:related} presents the related work. Section \ref{sec:system model} introduces the CPS model and presents the problem statement. Section \ref{sec:framework} gives our proposed framework. Section \ref{sec:analysis} presents our proposed solution approach. Section \ref{sec:simulation} contains a case study on a Boeing 747. Section \ref{sec:conclusion} concludes the paper.
\section{Related Work}\label{sec:related}
Safety verification \cite{prajna2007framework,pajic2014safety} and safety controller synthesis \cite{ames2019control,cohen2020approximate,qin2021learning,herbert2017fastrack} for CPS operated in benign environment have been extensively studied. 


Fault tolerant controllers \cite{zhang2008bibliographical,sharifi2010fault,xu2020distributed} and architectures \cite{sha2001using,bak2009system,mohan2013s3a} have been widely adopted for CPS that may incur faults. One of the well-known fault tolerant designs is the simplex architecture. This architecture consists of a main controller which is vulnerable to random failures and a safety controller which is verifiable and fault-free \cite{sha2001using}. Under certain conditions, e.g., the main controller experiences a fault, a decision module instantaneously switches to the safety controller. The decision module switches back to the main controller after the main controller recovers from fault. These fault tolerant approaches assume that there is no common failure for all components which may not hold for malicious attack.

There exist two main trends of approaches to address CPS under malicious cyber attacks. The first category aims at protecting the system from malicious attacks using control- and game-theoretic approaches \cite{pajic2014robustness,fawzi2014secure,cardenas2008research}. These approaches detect the attack and then filter its impact. The second body of literature focuses on designing attack tolerant systems. The CRAs \cite{mertoguno2019physics,arroyo2019yolo,arroyo2017fired,niu2022verifying} belong to this category. 




BFT++ and other variants \cite{mertoguno2019physics} are applied when CPS have redundant controllers. One of the redundant controllers is used as backup and equipped with a buffer storing the time-delayed inputs. The non-backup controllers are deliberately engineered to crash following a malicious exploit, e.g., by implementing software diversity \cite{larsen2014sok} or memory/instruction randomization \cite{kc2003countering}. Sensing the crash, BFT++ recovers the controllers quickly from the backup whose integrity is ensured by flushing the buffer.


The CRA proposed in \cite{arroyo2019yolo} and also the restart-based mechanisms \cite{abdi2017application,abdi2018guaranteed,romagnoli2020software,arauz2021linear,niu2022verifying} are applicable to CPS that do not have redundancy. These approaches reset the cyber subsystem to a `clean' state via restart to recover from attack. The authors of \cite{arroyo2019yolo} tunes controller availability for the safety of physical subsystem, whereas the restart-based mechanisms use reachability analysis \cite{abdi2018guaranteed,romagnoli2020software,arauz2021linear} for safety guarantee.
\section{System Model and Problem Formulation} \label{sec:system model}
We first give some notations before presenting the system model. Then we state the problem investigated in this paper. 

A continuous function $\alpha:[-b,a)\rightarrow(-\infty,\infty)$ belongs to extended class $\mathcal{K}$ if it is strictly increasing and $\alpha(0)=0$ for some $a,b>0$. Throughout this paper, we use $\mathbb{R}$, $\mathbb{R}_{\geq 0}$, $\mathbb{R}_{>0}$, and $\mathbb{Z}_{\geq 0}$ to denote the set of real numbers, non-negative real numbers, positive real numbers, and non-negative integers, respectively. Given a vector $x\in\mathbb{R}^n$, we denote its $i$-th entry as $[x]_i$, where $i=1,\ldots,n$.

Consider a CPS consisting of a cyber subsystem and a physical subsystem. The physical subsystem is modeled by a plant that evolves following
\begin{equation}\label{eq:dynamics}
    \dot{x}_t = f(x_t)+g(x_t)u_t,
\end{equation}
where $x_t\in \mathcal{X} \subset \mathbb{R}^n$ is the system state and $u_t\in\mathcal{U}\subset\mathbb{R}^m$ is the control input. Functions $f:\mathbb{R}^n\rightarrow\mathbb{R}^n$ and $g:\mathbb{R}^{n}\rightarrow\mathbb{R}^{n\times m}$ are assumed to be Lipschitz continuous. We also assume that $\mathcal{U}=\prod_{i=1}^m[u_{i,min},u_{i,max}]$ with $u_{i,min}<u_{i,max}$. The physical plant is normally recommended to be operated within a certain range $\mathcal{C} = \{x \in \mathcal{X}:h(x)\geq 0\}$, where $h:\mathbb{R}^n\rightarrow\mathbb{R}$ is a continuously differentiable function. We assume that set $\mathcal{C}$ is compact. Given the system state $x$, the actuator signal $u$ is determined by a control policy $\mu:\mathcal{X}\rightarrow\mathcal{U}$.

Although the physical plant evolves in continuous time, the cyber subsystem interacts with the physical subsystem following functioning cycles. We assume that the sensors can directly measure the physical state $x$. At each functioning cycle $k\in\mathbb{Z}_{\geq 0}$, the cyber subsystem measures $x_{k\delta}$ and updates the actuator signal $u_{k\delta}=\mu(x_{k\delta})$. The actuator signal remains constant during each functioning cycle $k$. In the remainder of this paper, we refer to a functioning cycle as an epoch with length $\delta>0$.

The system is subject to a malicious attack initiated by an intelligent adversary. The adversary aims at driving the physical plant outside $\mathcal{C}$ to damage it. The adversary can exploit the vulnerabilities in the cyber subsystem and intrude into the system. Once the adversary intrudes successfully, it gains access to the software, actuators, and other peripherals. As a consequence, the actuator signal is corrupted by the adversary and deviates from the desired control policy $\mu(\cdot)$. To recover the system from attack, the CRAs and other mechanisms have been proposed, as reviewed in Section \ref{sec:intro} and \ref{sec:related}. Let $t_1\geq 0$ be a time instant when the adversary corrupts the cyber subsystem. The CRAs eliminate the adversary from the system at some time $\tilde{t}>t_1$. We denote the time instant when the adversary successfully corrupts the cyber subsystem again for the first time after $\tilde{t}$ as $t_2>\tilde{t}$. We define the interval $[t_1,t_2]$ as an \emph{attack cycle}. Note that the length $A=t_2-t_1$ of each attack cycle varies, and is dependent on the adversary. Later in Section \ref{sec:analysis}, we will compute a lower bound for $A$ to guarantee system safety. 

Due to the malicious attack, the physical plant may have to be temporarily operated outside $\mathcal{C}$. To avoid causing irreversible damage to the plant, we need to minimize the amount of time that the CPS is operated outside $\mathcal{C}$ or minimize how far the physical state $x$ deviates from $\mathcal{C}$. We capture the instantaneous damage incurred by the plant when operated outside $\mathcal{C}$ as a cost $L:\mathbb{R}\rightarrow\mathbb{R}_{\geq 0}$, defined as 
\begin{equation}\label{eq:cost}
    L(h(x)) = \begin{cases}
    L_1(-h(x)),&\mbox{ if }h(x) < 0\\
    0,&\mbox{ if }h(x)\geq 0
    \end{cases}
\end{equation}
where $L_1:\mathbb{R}_{>0}\rightarrow\mathbb{R}_{\geq 0}$ is a monotone non-decreasing function. When $L_1(-h(x))=1$, then  $\int_{t}L(x)\dd t$ is equal to the amount of time such that $x_t\notin\mathcal{C}$. When $L_1(-h(x))=-h(x)$ for all $x\notin\mathcal{C}$, Eqn. \eqref{eq:cost} models the deviation of the physical plant from the boundary of $\mathcal{C}$. We define the physical safety with respect to budget $B$ as follows.
\begin{definition}[Physical Safety with Respect to Budget $B$]\label{def:safety}
The physical plant is safe with respect to a budget $B$ if the following relation holds for any attack cycle $[t_1,t_2]$:
\begin{equation} \label{budget}
    J=\int_{t=t_1}^{t_2}L(h(x_t))\dd t\leq B.
\end{equation}
\end{definition}
Eqn. \eqref{budget} enforces an upper bound on the cost incurred by the system during any attack cycle. When $B=0$, Definition \ref{def:safety} recovers the strict safety constraint $x_t\in\mathcal{C}$ for all $t\geq 0$ as a special case. Given Definition \ref{def:safety}, the problem of synthesizing a control policy with safety guarantee is stated as follows:
\begin{problem} \label{prob1}
Synthesize a control policy  $\mu:\mathcal{X}\rightarrow\mathcal{U}$ for the CPS such that Definition \ref{def:safety} is satisfied for a given budget $B$.
\end{problem}

\section{Our Proposed Cyber Resilient Framework}\label{sec:framework}

In this section, we first detail the timing behaviors of the CRAs. Then we construct a hybrid system that models CPS adopting any of these architectures. The simplex architecture reviewed in Section \ref{sec:related} is also incorporated in our framework for completeness. We finally reformulate Problem \ref{prob1} using the constructed hybrid system.

\subsection{Timing Behaviors of the CRAs}\label{sec:architecture} 

In this subsection, we present the timing behaviors of the CRAs \cite{mertoguno2019physics,arroyo2019yolo,arroyo2017fired}, which track the status of the cyber subsystem. Note that the status of the cyber subsystem are discrete. When the adversary intrudes into the system at epoch $j$, the cyber subsystem changes from the normal to the corrupted status, indicating the adversary can arbitrarily manipulate the controllers. If the CPS have redundant controllers as discussed in \cite{mertoguno2019physics}, the non-backup controllers will crash by epoch $j+N_1$ in the worst-case, which triggers controller restoration, leading the cyber subsystem to transit from the corrupted status to the restoration status. In practical implementations of BFT++, we observe that $N_1= 2$ and the buffer length is chosen to be greater than $N_1$. Denote the worst-case number of epochs needed for controller restoration as $N_2$. Then the cyber subsystem returns to the normal status using the backup controller by epoch $j+N_1+N_2$. To ensure safety of the physical subsystem, crash delay $N_1$ needs to be small enough, and the restoration time $N_2$ needs to be tolerated by the physical subsystem's resilience $\Delta(x)$ which is determined by the physical state and system dynamics.

When CPS have no redundant controllers, cyber subsystem recovery can be triggered by either the attack or the timer \cite{arroyo2019yolo,arroyo2017fired}. If the controller crashes due to attack, which takes at most $N_3$ epochs, then the system reboots and re-initializes the controller to recover it. The worst-case time needed for such controller recovery, denoted as $N_4$, is in general larger than BFT++, i.e., $N_4 \geq N_2$. We remark that the controller restart is triggered by the timer when the attack does not crash the controller. If the recovery is triggered by the timer and there is no attack, then the system restart is executed every $N_4 + N_5$ epochs, where $N_5$ is the number of epochs elapsed in the normal status during one restart period. Safety of the physical subsystem is then ensured by tuning the time between two consecutive restarts and the controller availability.

\begin{figure}[!htp]
    \centering
    \includegraphics[scale=0.18]{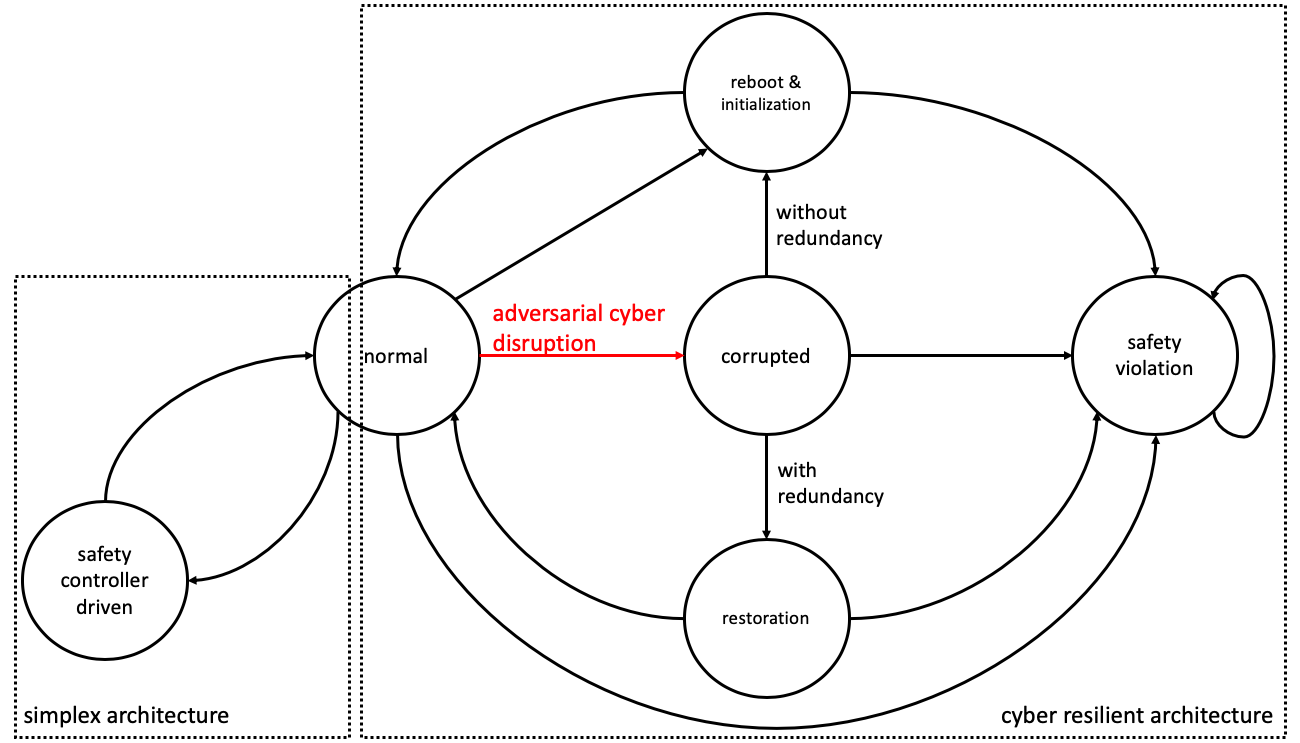}
    \caption{Hybrid system $H=(\mathcal{X},\mathcal{U},\mathcal{L},\mathcal{Y},\mathcal{Y}_0,Inv,\mathcal{F},\allowbreak \Sigma,\allowbreak\mathcal{E},\allowbreak\Phi)$ captures the simplex architecture (left part) and the CRAs (right part). Each state $y=(x,(l_1,j))\in\mathcal{Y}$ of the hybrid system captures the continuous physical state $x\in\mathcal{X}$ and the discrete location $l=(l_1,j)\in\mathcal{L}$ including the system status $l_1$ and the epoch index $j$. Each circle in the figure represents a discrete location in $\mathcal{L}$, with the epoch indices being omitted. The arrows in the figure represent the transitions $\Sigma$ of hybrid system $H$. Each transition is labeled using $e\in\mathcal{E}$, and is enabled when the corresponding clock constraint $\phi\in\Phi$ is satisfied. The detailed labels and clock constraints are given in Table \ref{table:transition}. The transition in red color is triggered by external event, i.e., the adversarial cyber disruption. }
    \label{fig:unify framework}
\end{figure}

\begin{table*}[!b] 
    \centering
    \begin{tabular}{|m{13mm}|c|c|c|m{33mm}|}
    
    \hline
    Designs &Starting state $(x,(l_1,j))$  &  Target state $(x',(l_1',j'))$ & Label $e=(\gamma,e_2)$ & Clock constraint $\phi$\\
    \hline
    \hline
   \multirow{2}{*}{\shortstack[l]{Simplex \\architecture}} &  $(x,(normal,j))$    & $(x,(SC,j'))$ & $(Safety~controller~activated,e_2)$ & $ (j+e_2=j')\land (e_2=0)$\\
     \cline{2-5}
   &  $(x,(SC,j))$    & $(x,(normal,j'))$ & $(Safety~controller ~inactivated,e_2)$ & $(j+e_2=j')\land (e_2=0)$\\
     \hline
    \multirow{7}{*}{\shortstack[l]{the CRAs}} &  $(x,(normal,j))$    & $(x',(corrupted,j'))$ & $(Adversarial~ cyber~ disruption,e_2)$ & $ (j+e_2=j')\land (e_2\geq 0)$\\
     \cline{2-5}
    &  $(x,(corrupted,j))$    & $(x',(restoration,j'))$ & $(Controller~crash,e_2)$ & $ (j+e_2=j')\land (e_2\leq N_1)$\\
    \cline{2-5}
     &  $(x,(restoration,j))$    & $(x',(normal,j'))$ & $(Controller~restored,e_2)$ & $ (j+e_2=j')\land (e_2\leq N_2)$\\
      \cline{2-5}
     &  $(x,(normal,j))$    & $(x',(R\&I,j'))$ & $(Timer,e_2)$ & $ (j+e_2=j')\land (e_2=N_5)$\\
     \cline{2-5}
     &  $(x,(corrupted,j))$    & $(x',(R\&I,j'))$ & $(Controller~crash,e_2)$ & $ (j+e_2=j')\land (e_2\leq N_3)$\\
     \cline{2-5}
     &  $(x,(corrupted,j))$    & $(x',(R\&I,j'))$ & $( Timer,e_2)$ & $ (j+e_2=j')\land (e_2\leq N_5)$\\
     \cline{2-5}
     &  $(x,(R\&I,j))$    & $(x',(normal,j'))$ & $(Initialization~done,e_2)$ & $ (j+e_2=j')\land (e_2\leq N_4)$\\
     \cline{2-5}
     &  $(x,(\cdot,j))$    & $(x',(safety~violation,j'))$ & $(Maximum ~tolerance,e_2)$ & $ (j+e_2=j')\land (e_2\geq \Delta(x))$\\
     \hline
    \end{tabular}
    \caption{This table shows the transitions with their corresponding labels and clock constraints. The second and third columns give the starting and end states of a transition, respectively. The fourth column presents the label $e=(\gamma,e_2)$ associated with the transition $\sigma=((x,(l_1,j)),(x',(l_1',j')))$. The trigger event of the transition is denoted as $\gamma$, and the time elapsed in $l_1$ is denoted using $e_2\in\mathbb{Z}_{\geq 0}$. The fifth column gives the clock constraint $\phi$ that needs to be satisfied by the transition $\sigma$ and parameter $e_2$. Parameters $N_1,N_2,N_3,N_4$, and $N_5$ are determined by the architecture design. Parameter $\Delta(x)$ captures the maximum tolerance provided by the physical subsystem.}
    \label{table:transition}
\end{table*}

\subsection{Our Proposed Framework}

In this subsection, we first construct a hybrid system to model the CPS implementing the simplex architecture or CRAs. We then restate Problem \ref{prob1} in the context of the hybrid system. We construct a hybrid system $H=(\mathcal{X},\mathcal{U},\mathcal{L},\mathcal{Y},\mathcal{Y}_0,Inv,\mathcal{F},\allowbreak \Sigma,\mathcal{E},\Phi)$, as shown in Fig. \ref{fig:unify framework}, where
\begin{itemize}
    \item $\mathcal{X}\subseteq\mathbb{R}^{n}$ is the continuous state space modeling the states of the physical subsystem. $\mathcal{U}\subseteq\mathbb{R}^m$ is the set of admissible control inputs of the physical subsystem.
    \item $\mathcal{L}=\{normal, R\& I, SC, restoration,\allowbreak corrupted, \allowbreak safety~violation\}\times\mathbb{Z}_{\geq 0}$ is a set of discrete locations\footnote{Throughout this paper, we denote $l=reboot\&initialization$ as $l=R\&I$ and denote $l=safety~controller~driven$ as $l=SC$ for simplicity.}, with each location $l\in\mathcal{L}$ modeling the status of the system at each epoch index. 
    \item $\mathcal{Y}=\mathcal{X}\times\mathcal{L}$ is the state space of hybrid system $H$, and $\mathcal{Y}_0\subseteq \mathcal{Y}$ is the set of initial states.
    \item $Inv:\mathcal{L}\rightarrow 2^\mathcal{X}$ is the invariant that maps from the set of locations to the power set of $\mathcal{X}$. That is, $Inv(l)\subseteq\mathcal{X}$ specifies the set of possible continuous states when the system is at location $l$.
    \item $\mathcal{F}$ is the set of vector fields. For each $F\in\mathcal{F}$, the continuous system state evolves as $\dot{x}=F(x,u,l)$, where $F$ is jointly determined by the system dynamics and the status of the cyber subsystem, and $\dot{x}$ is the time derivative of continuous state $x$.
    \item $\Sigma\subseteq \mathcal{Y}\times \mathcal{Y}$ is the set of transitions between the states of the hybrid system. A transition $\sigma=((x,l),(x',l'))$ models the state transition from $(x,l)$ to $(x',l')$.
    \item $\mathcal{E}=\Gamma\cup\mathbb{Z}_{\geq 0}$ is a set of labels, where $\Gamma$ is the finite alphabet set. Each $\gamma\in\Gamma$ is labeled on some transition $\sigma\in \Sigma$ indicating the events that triggers the transition.
    \item $\Phi$ is a set of clock constraints, with each $\phi\in\Phi$ being defined as $\phi: \Sigma\times \mathbb{Z}_{\geq 0}\rightarrow\{0,1\}$. Function $\phi$ maps the time elapsed in each discrete location labeled on each transition to the binary set $\{0,1\}$, indicating if the transition is enabled or not.
\end{itemize}
In Fig. \ref{fig:unify framework}, we only label the first element of each location $l=(l_1,j)$, where $l_1\in\{normal, R\& I, SC, restoration,\allowbreak corrupted, \allowbreak safety~violation\}$, and the epoch index $j\in\mathbb{Z}_{\geq 0}$ is omitted. Variable $l_1$ represents the status of the system, as explained in Section \ref{sec:architecture}. Particularly, we use $R\&I$ to represent controller reboot and initialization for CPS without redundancy, and use $SC$ to represent the status where the system is driven by the safety controller as suggested in the simplex architecture. In the remainder of this paper, we refer to $l_1$ as the location of $H$ omitting the epoch index when the context is clear. The set of vector fields $\mathcal{F}$ captures the dynamics at each discrete location. For instance, when $l=(R\&I,j)$, we have that $\dot{x}=F(x,u,l)=f(x)$ for all $j\in\mathbb{Z}_{\geq 0}$ since $u_t=0$.

We label each transition $\sigma=((x,l),(x',l'))\in\Sigma$ using $e=(\gamma,e_2)\in\mathcal{E}$. The detailed label associated with each transition can be found in Table \ref{table:transition}. Here we use $\gamma$ to represent the event that triggers the transition. For instance, the transition from $normal$ to $corrupted$ is triggered by the adversarial cyber disruption, whereas the transition from $corrupted$ to $restoration$ is triggered by controller crash. The element $e_2\in\mathbb{Z}_{\geq 0}$ denotes the number of epochs elapsed in status $l_1$ before the occurrence of transition $\sigma=((x,(l_1,j)),(x',(l_1',j')))$. For example, at most $N_1$ epochs elapse at location $corrupted$ before the transition from $(x,(corrupted,j))$ to $(x',(restoration,j'))$ occurs. When $e_2=0$, it indicates that the transition occurs instantaneously. 

A transition in hybrid system $H$ is enabled if and only if a clock constraint $\phi$ associated with the transition is satisfied. Consider a transition $\sigma=((x,(l_1,j)),(x',(l_1',j')))$ labeled with $e=(\gamma,e_2)$. A clock constraint $\phi$ verifies if $j$, $j'$, and $e_2$ satisfies $j+e_2=j'$ with $e_2$ being determined by the architecture. The clock constraint enables hybrid system $H$ to always track the correct epoch index.


We remark that for each status $l_1$, we do not depict the transitions $(x,(l_1,j))$ to $(x,(l_1,j+1))$ for compactness of the figure. These transitions do not cause any system status jump, and only track the evolution of epoch indices. We also define a guard set $\mathcal{G}(l,l')$ as $\mathcal{G}(l,l')=\{x \in \mathcal{X}: ((x,l),(x,l')) \in \Sigma \}$ which represents the set of physical states starting from which the system can transit from location $l$ to $l'$ on the hybrid system $H$.


The transitions that end at $l_1=safety~violation$ are triggered by event $\gamma=Maximum~tolerance$, indicating that the physical subsystem has not received the correct input in time and has utilized all resilience against the disruption. In this case, safety violation $J>B$ becomes inevitable, and the system cannot recover after safety violation (captured via the self-loop in Fig. \ref{fig:unify framework}). We capture the maximum tolerance provided by the physical subsystem using $e_2$ labeled on the transitions. Note that the tolerance depends on the physical system state $x$ and is denoted as $\Delta(x)$. 



We are now ready to translate problem \ref{prob1} using the context of hybrid system $H$ as follows:\\
\textbf{Restatement of Problem \ref{prob1}.} Given hybrid system $H$, synthesize a control policy $\mu:\mathcal{X}\rightarrow\mathcal{U}$ such that hybrid system $H$ never reaches status $l_1=safety~violation$.

\section{Analysis of the Proposed Framework} \label{sec:analysis}

This section presents the proposed solution approach to Problem \ref{prob1}. We first develop sufficient conditions for the control policy that guarantees safety of the physical subsystem under a cyber attack. Then we formulate the derived conditions as sum-of-squares (SOS) constraints and propose an algorithm to compute a control policy and the corresponding parameters. We finally give the convergence and complexity of our algorithm.

In the following, we derive the sufficient conditions under which a control policy $\mu:\mathcal{X}\rightarrow\mathcal{U}$ ensures the system to satisfy Definition \ref{def:safety} with respect to a given budget $B$. The idea is that if control policy $\mu$ drives the physical subsystem to $\mathcal{C}_1=\{x:h(x)\geq c\}\subseteq\mathcal{C}$ when $l_1=normal$ and we can constrain the system trajectory to remain in a set $\mathcal{D}=\{x:h(x)\geq -d\}\supseteq\mathcal{C}$ for any $l_1\in\{corrupted, R\&I,restoration\}$, then we can limit the worst-case cost incurred during one attack cycle to be bounded by $B$ by tuning choices $c,d\geq0$. We denote the worst-case number of epochs when the system is at some status $l_1\in\{corrupted, R\&I,restoration\}$ as $N$. We then have the following conditions:

\begin{theorem}\label{thm:sufficient condition}
Consider hybrid system $H$ and let set $\mathcal{C}$ be defined as in Section \ref{sec:system model}. Let $h_c(x) = h(x)-c$ and $h_d(x) = h(x) + d$. We define $\mathcal{C}_1=\{x:h_c(x)\geq 0\}$ and $\mathcal{D}=\{x:h_d(x)\geq 0\}$. Consider an arbitrary attack cycle denoted as $[t_1,t_2]$ and suppose $x_{t_1}\in\mathcal{C}_1$. If there exist constants $c,d\geq0$, $\tau>0$, and a control policy $\mu:\mathcal{X}\rightarrow\mathcal{U}$ such that
\begin{subequations}\label{eq:time derivative BC}
\begin{align}
    &\frac{\partial h_d}{\partial x}(x)(f(x)+g(x)u)\geq -\frac{c+d}{N\delta},~\forall (x,u)\in\mathcal{D}\times\mathcal{U}\label{eq:time derivative BC 1}\\
    &\frac{\partial h_c}{\partial x}(x)(f(x)+g(x)\mu(x))\geq \frac{c+d}{\tau},~\forall x\in\mathcal{D}\setminus \mathcal{C}_1  \label{eq:time derivative BC 3}\\ 
    &\frac{\partial h_c}{\partial x}(x)(f(x)+g(x)\mu(x))\geq -\alpha(h_c(x)),~\forall x\in\mathcal{C}_1  \label{eq:time derivative BC 4}\\
    & \frac{N\delta}{c+d}\int_{s=0}^{d}L_1(s)\dd s + \frac{\tau}{c+d} \int_{s=0}^{d} L_1(s)\dd s \leq B  \label{eq:time derivative BC 2}
\end{align}
\end{subequations}
then system \eqref{eq:dynamics} is safe with respect to budget $B$ by taking policy $\mu$ at $l_1=normal$, provided that $A=t_2-t_1 \geq \tau+N\delta$. Furthermore, $x_t \in \mathcal{D}$  for $t \in [t_1,t_2]$ and $x_t \in \mathcal{C}_1$  for $t \in [t_1+\tau+N\delta,t_2]$.
\end{theorem}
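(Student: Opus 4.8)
The plan is to split an arbitrary attack cycle $[t_1,t_2]$ into a \emph{recovery phase} $[t_1,t_1']$, on which $l_1\in\{corrupted,R\&I,restoration\}$, and a \emph{normal phase} $[t_1',t_2]$, on which $l_1=normal$ and the input equals $\mu(x_t)$. By the timing behaviour of Section~\ref{sec:architecture} the recovery phase spans at most $N$ epochs, so $t_1'-t_1\le N\delta$; combined with $A=t_2-t_1\ge\tau+N\delta$ this gives $t_1'\le t_2$ and $t_1'+\tau\le t_2$. On the recovery phase the actuator signal is some $u_t\in\mathcal{U}$ (chosen by the adversary, or $0$ during $R\&I$), so the only usable fact is the input-independent bound \eqref{eq:time derivative BC 1}. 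First I would run a continuity/bootstrapping argument: as long as $x_t\in\mathcal{D}$, \eqref{eq:time derivative BC 1} gives $\dot h(x_t)=\frac{\partial h_d}{\partial x}(x_t)(f(x_t)+g(x_t)u_t)\ge-\tfrac{c+d}{N\delta}$, hence $h(x_t)\ge h(x_{t_1})-\tfrac{c+d}{N\delta}(t-t_1)\ge c-\tfrac{c+d}{N\delta}(t-t_1)=:\ell(t)$, and since $\ell(t)\ge-d$ for $t\le t_1+N\delta$ the trajectory never leaves $\mathcal{D}$; so $x_t\in\mathcal{D}$ and $h(x_t)\ge\ell(t)$ hold on all of $[t_1,t_1']$, and in particular $x_{t_1'}\in\mathcal{D}$.

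Next I would bound the recovery-phase cost. The cost $L(\cdot)$ is nonincreasing on $\mathbb{R}$ (because $L_1$ is nondecreasing and $L$ vanishes on $[0,\infty)$), so $h(x_t)\ge\ell(t)$ yields $L(h(x_t))\le L(\ell(t))$; since $L\ge0$ and $t_1'\le t_1+N\delta$, the substitution $s=-\ell(t)$ (so $\dd s=\tfrac{c+d}{N\delta}\dd t$) gives
\begin{equation*}
\int_{t_1}^{t_1'}L(h(x_t))\dd t\;\le\;\int_{t_1}^{t_1+N\delta}L(\ell(t))\dd t\;=\;\frac{N\delta}{c+d}\int_{0}^{d}L_1(s)\dd s .
\end{equation*}

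On the normal phase I would distinguish two cases. If $x_{t_1'}\in\mathcal{C}_1$, then \eqref{eq:time derivative BC 4} together with the comparison lemma (with $\alpha$ extended class $\mathcal{K}$, so the solution of $\dot y=-\alpha(y)$, $y(t_1')\ge0$, stays nonnegative) gives $x_t\in\mathcal{C}_1$ for all $t\in[t_1',t_2]$, hence zero normal-phase cost. If $x_{t_1'}\in\mathcal{D}\setminus\mathcal{C}_1$, then \eqref{eq:time derivative BC 3} forces $\dot h_c(x_t)\ge\tfrac{c+d}{\tau}>0$ while $x_t\in\mathcal{D}\setminus\mathcal{C}_1$; starting from $h_c(x_{t_1'})\ge-(c+d)$, this drives $x_t$ into $\mathcal{C}_1$ within time $\tau$, after which \eqref{eq:time derivative BC 4} keeps it there, and the strict positivity of $\dot h_c$ on $\mathcal{D}\setminus\mathcal{C}_1$ prevents escape through $\{h_d=0\}$; thus $x_t\in\mathcal{D}$ on all of $[t_1',t_2]$ and $x_t\in\mathcal{C}_1$ for $t\ge t_1'+\tau$. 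Cost accrues only on the subinterval of $[t_1',t_1'+\tau)$ where $h(x_t)<0$, on which $-h(x_t)\le d-\tfrac{c+d}{\tau}(t-t_1')$, so the same monotonicity-and-substitution step (now $s=d-\tfrac{c+d}{\tau}(t-t_1')$) bounds the normal-phase cost by $\tfrac{\tau}{c+d}\int_0^{d}L_1(s)\dd s$.

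Adding the two cost bounds and invoking \eqref{eq:time derivative BC 2} yields $J=\int_{t_1}^{t_2}L(h(x_t))\dd t\le B$, the bound required by Definition~\ref{def:safety}; the invariance conclusions $x_t\in\mathcal{D}$ on $[t_1,t_2]$ and, using $t_1'\le t_1+N\delta$, $x_t\in\mathcal{C}_1$ on $[t_1+\tau+N\delta,t_2]$ then follow by concatenating the two phases. (Evaluated at $t_2$, the last statement also supplies the hypothesis $x_{t_2}\in\mathcal{C}_1$ needed to iterate the estimate over consecutive attack cycles.) The main obstacles I anticipate are the bootstrapping step certifying that the trajectory never leaves $\mathcal{D}$ (and, on the normal phase, never re-enters $\mathcal{D}\setminus\mathcal{C}_1$ out of $\mathcal{C}_1$), so that the differential inequalities \eqref{eq:time derivative BC 1}, \eqref{eq:time derivative BC 3} and \eqref{eq:time derivative BC 4} are genuinely applicable along $x_t$, and the comparison-lemma argument for forward invariance of $\mathcal{C}_1$ under $\mu$; once the worst-case linear envelope $\ell(t)$ and its normal-phase analogue are established, the change-of-variables cost estimates are routine.
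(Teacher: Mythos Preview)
Your proposal is correct and follows essentially the same two-step strategy as the paper: first establish the linear lower envelopes $h(x_t)\ge c-\tfrac{c+d}{N\delta}(t-t_1)$ on the recovery phase and $h(x_t)\ge -d+\tfrac{c+d}{\tau}(t-t_1-N\delta)$ on the normal phase (the paper cites a control-barrier-function result for forward invariance of $\mathcal{C}_1$ where you invoke the comparison lemma), then bound $J_1$ and $J_2$ by the same monotonicity-plus-substitution computation you describe. Your version is in fact slightly more careful than the paper's, since you work with the actual switching time $t_1'\le t_1+N\delta$ rather than the worst case and you make the bootstrapping step (that the differential inequalities remain applicable along $x_t$) explicit, whereas the paper simply integrates.
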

\begin{proof}
The proof consists of two steps. We first find the guard set for each state of hybrid system $H$ when the conditions in Eqn. \eqref{eq:time derivative BC} hold. We then prove that the system is safe with respect to budget $B$ according to Definition \ref{def:safety}.

In the first step, we show that if $t_2\geq t_1+\tau+N\delta$, $x_{t_1}\in\mathcal{C}_1$ and the conditions in (\ref{eq:time derivative BC}) hold, then $x_t \in \mathcal{D}$ for $t \in [t_1,t_2]$. Additionally we show that $x_t \in \mathcal{C}_1$  for $t \in [t_1+\tau+N\delta,t_2]$ using control policy $\mu$ at $l_1=normal$. As hybrid system $H$ takes $N$ epochs to transition to $l_1=normal$ after being corrupted, therefore $t_1+N\delta$ is the time instant when transition $\sigma=(\cdot,normal)$ takes place after $t_1$. Then for any $t' \in [t_1,t_1+N\delta]$ and for any $u \in \mathcal{U}$, we have that
\begin{align}
h(x_{t'})= h(x_{t_1})&+\int_{t=t_1}^{t'}\dot{h}\dd t \geq  c - \frac{c+d}{N\delta} (t'-t_1) \geq -d, \label{label_set_corrupt}
\end{align}
where the inequality holds by Eqn. \eqref{eq:time derivative BC 1} and the assumptions that $x_{t_1}\in\mathcal{C}_1$ and $t' \in [t_1,t_1+N\delta]$. Therefore, we have that $x_{t'} \in \mathcal{D}$ for all $t' \in [t_1,t_1+N\delta]$, indicating that $\mathcal{G}(R\&I,normal), \mathcal{G}(restoration,normal)\subseteq\mathcal{D}$.

Now consider that control policy $\mu$ is applied when the system is at $l_1=normal$. Let $\hat{t}$ be any time when the system is at $l_1=normal$ for which the trajectory remains in $\mathcal{D}\setminus \mathcal{C}_1$ (i.e. $-d \leq h(x_{\hat{t}}) <-c)$. Then we can write
\begin{multline} \label{label_set_normal}
    h(x_{\hat{t}})=h(x_{t_1+N\delta})+\int_{t=t_1+N\delta}^{\hat{t}}\dot{h}\dd t\\
    \geq -d+\frac{c+d}{\tau}(\hat{t} -t_1-N\delta),
\end{multline} 
where the inequality holds by Eqn. \eqref{eq:time derivative BC 3} and the fact that $\mathcal{G}(R\&I,normal), \mathcal{G}(restoration,normal)\subseteq\mathcal{D}$. If $\hat{t} \geq t_1+N\delta+\tau$, then Eqn. \eqref{label_set_normal} renders $h(x_{\hat{t}})\geq c$ and thus $x_{\hat{t}}\in\mathcal{C}_1$. Further note that by \cite[Thm. 2]{ames2019control}, $\mathcal{C}_1$ is forward invariant using control policy $\mu$. Thus, $x_t \in \mathcal{C}_1,~\forall t \in [t_1+N\delta+\tau,t_2]$. Since $\mathcal{C}_1\subseteq\mathcal{D}$, we have $x_t \in \mathcal{D}$  for all $t \in [t_1,t_2]$.

In the second step, we quantify the worst-case cost incurred by the system. We first compute the cost incurred during time $[t_1,t_1+N\delta]$. Suppose the system reaches the boundary of $\mathcal{C}$ at time instants $z_1\leq z_2\leq \ldots\leq z_K\leq t_1+N\delta$, where $z_1\geq t_1$ and $z_K\leq t_1+N\delta$. We have that
\begin{multline*}
    J_1\doteq\int_{t=t_1}^{t_1+N\delta}L(h(x_t))\dd t
    =\int_{t=t_1}^{z_1}L(h(x_t))\dd t\\
    +\sum_{j=1}^{K-1}\int_{t=z_{j}}^{z_{j+1}}L(h(x_t))\dd t
    +\int_{t=z_K}^{t_1+N\delta}L(h(x_t))\dd t
\end{multline*}
Since $h(x_t)\in\mathcal{C}$ for $t\in[t_1,z_1]$, therefore $L(h(x_t))=0$. Additionally, since $L_1$ is monotone non-decreasing, by Eqn. \eqref{label_set_corrupt} we have that $L(h(x_t))=L_1(-h(x_t))\leq L_1(-c+\frac{c+d}{N\delta}(t-t_1))$ for any $t\in[z_j,z_{j+1}]$ if $h(x_t)\notin\mathcal{C}$. Using these arguments, we have that 
\begin{align*}
    J_1&\leq \sum_{j=1}^{K-1}\int_{t=z_j}^{z_{j+1}}L_1(-c+\frac{c+d}{N\delta }t)\dd t\\
    &\quad\quad\quad\quad\quad\quad+\int_{t=z_K}^{t_1+N\delta}L_1(-c+\frac{c+d}{N\delta }t)\dd t\\
    &=\int_{t=z_1}^{t_1+N\delta}L_1(-c+\frac{c+d}{N\delta }t)\dd t.
\end{align*}
Using Eqn. \eqref{label_set_corrupt}, it follows that $z_1\geq t_1+c/(\frac{c+d}{N\delta})=t_1+\frac{cN\delta}{c+d}$. Therefore, we have that 
\begin{align*}
    J_1 &\leq \int_{t=t_1+\frac{cN\delta}{c+d}}^{t_1+N\delta}L_1(-c+\frac{c+d}{N\delta }(t-t_1))\dd t \\
    &= \int_{t=0}^{\frac{d N\delta}{c+d}}L_1(\frac{c+d}{N\delta }t)\dd t= \frac{N\delta}{c+d}\int_{s=0}^{d}L_1(s)\dd s,
\end{align*}
where the above holds by variable substitution and the fact that $L_1$ is non-negative.

We now quantify the worst-case cost incurred during time $[t_1+N\delta,t_2]$. By Eqn. \eqref{label_set_normal}, $h(x_t)\geq -d+\frac{c+d}{\tau}(t-t_1-N\delta)$ for $x_t \in \mathcal{D}\setminus \mathcal{C}_1$. Note that $h(x_t) \geq 0$ for all $t\in[t_1+N\delta+\frac{d\tau}{c+d},t_2]$ using control policy $\mu(x)$. Therefore we have that
\begin{align*}
    J_2&\doteq\int_{t=t_1+N\delta}^{t_2}L(h(x_t))\dd t\\
    &=\int_{t=t_1+N\delta}^{t_1+N\delta+\frac{d\tau}{c+d}}L(h(x_t))\dd t+\int_{t=t_1+N\delta+\frac{d\tau}{c+d}}^{t_2}L(h(x_t))\dd t\\
    & \leq \int_{t=N\delta}^{N\delta+\frac{d\tau}{c+d}} L_1(d-\frac{c+d}{\tau}(t-N\delta))\dd t \\
    & = \int_{t=0}^{\frac{d\tau}{c+d}} L_1(d-\frac{c+d}{\tau}t)\dd t =\frac{\tau}{c+d} \int_{s=0}^{d} L_1(s)\dd s
\end{align*}
where the above holds by (\ref{label_set_normal}), $L_1$ is monotone non-decreasing and non-negative, and $L(h(x_t))=0$ for $x_t \in \mathcal{C}$. 
Therefore we have that the total cost is upper bounded by $J_1+J_2$, which yields condition \eqref{eq:time derivative BC 2}. 
\end{proof}

The above result also encompasses the case when the physical subsystem is subject to a strict safety constraint, i.e., $x_t\in\mathcal{C}$ for all $t\geq 0$. This case can be captured by letting $B=0$ in Definition \ref{def:safety}. The sufficient conditions for a control policy with strict safety guarantee are given as follows.
\begin{corollary}
Consider hybrid system $H$ and a safety set $\mathcal{C}$. Let $h_c(x) = h(x)-c$ and $\mathcal{C}_1=\{x:h_c(x)\geq 0\}$. Consider an arbitrary attack cycle denoted as $[t_1,t_2]$ and suppose $x_{0}\in\mathcal{C}_1$. If there exist constants $c\geq0$, $\tau>0$, and a control policy $\mu:\mathcal{X}\rightarrow\mathcal{U}$ such that
\begin{subequations}
\begin{align}
    &\frac{\partial h}{\partial x}(x)(f(x)+g(x)u)\geq -\frac{c}{N\delta},~\forall (x,u)\in\mathcal{C}\times\mathcal{U}\label{eq:strict safety 1}\\
    &\frac{\partial h_c}{\partial x}(x)(f(x)+g(x)\mu(x))\geq \frac{c}{\tau},~\forall x\in\mathcal{C}\setminus \mathcal{C}_1 \label{eq:strict safety 2}\\ 
    &\frac{\partial h_c}{\partial x}(x)(f(x)+g(x)\mu(x))\geq -\alpha(h_c(x)),~\forall x\in\mathcal{C}_1 \label{eq:strict safety 3}
\end{align}
\end{subequations}
$x_t \in \mathcal{C}$ for all $t \geq0$ provided that $A=t_2-t_1 \geq \tau+N\delta$.
\end{corollary}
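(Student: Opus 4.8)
The plan is to obtain this corollary as the special case $d=0$ of Theorem~\ref{thm:sufficient condition}. First I would substitute $d=0$ into the statement of Theorem~\ref{thm:sufficient condition}: then $h_d(x)=h(x)$ and hence $\mathcal{D}=\{x:h(x)\geq 0\}=\mathcal{C}$, so condition \eqref{eq:time derivative BC 1} collapses to \eqref{eq:strict safety 1}, condition \eqref{eq:time derivative BC 3} collapses to \eqref{eq:strict safety 2} over $\mathcal{C}\setminus\mathcal{C}_1$, and condition \eqref{eq:time derivative BC 4} becomes \eqref{eq:strict safety 3}. For the budget inequality \eqref{eq:time derivative BC 2}, note that with $d=0$ both terms contain the integral $\int_{s=0}^{0}L_1(s)\dd s=0$, so its left-hand side equals $0$, which satisfies \eqref{eq:time derivative BC 2} with $B=0$ trivially. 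Thus the hypotheses \eqref{eq:strict safety 1}--\eqref{eq:strict safety 3} are exactly the hypotheses of Theorem~\ref{thm:sufficient condition} with $d=0$ and $B=0$.

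Next I would chain the conclusion of Theorem~\ref{thm:sufficient condition} over attack cycles. For any attack cycle $[t_1,t_2]$ with $x_{t_1}\in\mathcal{C}_1$ and $A=t_2-t_1\geq\tau+N\delta$, the theorem yields $x_t\in\mathcal{D}=\mathcal{C}$ for all $t\in[t_1,t_2]$, and moreover $x_t\in\mathcal{C}_1$ for $t\in[t_1+\tau+N\delta,t_2]$; in particular $x_{t_2}\in\mathcal{C}_1$. Since the next attack cycle begins at $t_2$, its entry state already lies in $\mathcal{C}_1$, so Theorem~\ref{thm:sufficient condition} applies again, and by induction over consecutive attack cycles $x_t\in\mathcal{C}$ for every $t$ contained in some attack cycle. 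For the initial segment $[0,t_1]$ before the first corruption, the system stays at $l_1=normal$ and uses $\mu$; by \eqref{eq:strict safety 3} and \cite[Thm.~2]{ames2019control}, $\mathcal{C}_1$ is forward invariant under $\mu$, so $x_0\in\mathcal{C}_1$ gives $x_t\in\mathcal{C}_1\subseteq\mathcal{C}$ for $t\in[0,t_1]$. Combining the two cases gives $x_t\in\mathcal{C}$ for all $t\geq 0$.

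Since this is essentially a reduction there is no substantive obstacle; the only point needing care is the bookkeeping across attack cycles, namely verifying that the exit state of each attack cycle is a legitimate entry state ($\in\mathcal{C}_1$) for the next one, which is precisely what the last sentence of Theorem~\ref{thm:sufficient condition} supplies once we observe that $A\geq\tau+N\delta$ makes the interval $[t_1+\tau+N\delta,t_2]$ nonempty. Alternatively, one could re-derive the claim directly by repeating the first step of the proof of Theorem~\ref{thm:sufficient condition} with $d=0$: \eqref{label_set_corrupt} becomes $h(x_{t'})\geq c-\tfrac{c}{N\delta}(t'-t_1)\geq 0$ on $[t_1,t_1+N\delta]$, and \eqref{label_set_normal} becomes $h(x_{\hat t})\geq \tfrac{c}{\tau}(\hat t-t_1-N\delta)\geq 0$ during the normal phase, so the trajectory never leaves $\mathcal{C}$ and the incurred cost is identically zero; but invoking Theorem~\ref{thm:sufficient condition} as a black box is cleaner.
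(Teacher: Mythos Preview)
Your proposal is correct and follows the same approach as the paper: both prove the corollary by specializing Theorem~\ref{thm:sufficient condition} to $d=B=0$, so that $\mathcal{D}=\mathcal{C}$ and the cost vanishes, and then observe that since the attack cycle is arbitrary and $x_0\in\mathcal{C}_1$, the containment extends to all $t\geq 0$. Your write-up is more explicit than the paper's two-sentence argument in spelling out the chaining across consecutive attack cycles and the treatment of the initial segment $[0,t_1]$, but the underlying reduction is identical.
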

\begin{proof}
The corollary can be proved as a special case of Theorem \ref{thm:sufficient condition} with $d=B=0$, which yields that $\mathcal{D}=\mathcal{C}$ and thereby $x_t\in\mathcal{C},~\forall t\in[t_1,t_2]$. Note that $[t_1,t_2]$ is an arbitrary attack cycle and $x_0\in\mathcal{C}_1$, rendering $x_t\in\mathcal{C},~\forall t\geq 0$.  
\end{proof}

The above analysis can also be used for the safety controller design of the simplex architecture. The safety controller, which is invoked when the system approaches the boundary of $\mathcal{C}$, can be obtained using Theorem \ref{thm:sufficient condition} by letting $c=d=B=0$ to guarantee strict safety with respect to $\mathcal{C}$. Control policy $\mu$ only needs to satisfy Eqn. \eqref{eq:time derivative BC 4} in this case.

Now we focus on the computation of control policy $\mu$ as well as parameters $c,d\geq 0$ and $\tau>0$ so that safety is satisfied according to Definition \ref{def:safety}. Our idea is to translate the conditions in Theorem \ref{thm:sufficient condition} to a set of sum-of-squares (SOS) constraints. We first make the following assumption.
\begin{assumption}\label{assump:semi-algebraic}
We assume that functions $f(x)$, $g(x)$, and $h(x)$ are polynomial in $x$. Additionally, we assume that function $L_1$ is polynomial in $-h(x)$. 
\end{assumption}
When Assumption \ref{assump:semi-algebraic} holds, $L_1(-h(x))$ can be written as $L_1(-h(x))=\sum_{i=0}^k (-h(x))^i a_i$, where $a_i$ is the coefficient of $(-h(x))^i$ for each $i=0,\ldots,k$. Next we formulate Eqn. \eqref{eq:time derivative BC} as a set of SOS constraints. 
\begin{proposition}\label{prop:sos}
Suppose there exist parameters $c,d\geq 0$ and $\theta>0$ such that the following expressions are SOS:
\begin{subequations}\label{eq:sos}
\begin{align}
    &\frac{\partial h_d}{\partial x}(x)[f(x)+g(x)u] +\frac{c+d}{N\delta} - q(x,u)h_d(x)\label{eq:sos 1}\\
    &-\sum_{i=1}^m(w_i(x,u)([u]_i-[u]_{i,min})+v_i(x,u)([u]_{i,max}-[u]_i)),\nonumber \\
    &\frac{\partial h_c}{\partial x}(x)[f(x)+g(x)\lambda(x)]-(c+d)\theta\nonumber\\
    &\quad\quad\quad\quad\quad\quad\quad\quad- l(x)h_d(x) + p(x)h_c(x),\label{eq:sos 2}\\
    &\frac{\partial h_c}{\partial x}(x)[f(x)+g(x)\lambda(x)]+\alpha(h_c(x)) -r(x) h_c(x),\label{eq:sos 3} \\
    &\lambda_i(x)-[u]_{i,min},\quad [u]_{i,max}-\lambda_i(x),~\forall i=1,\ldots,m, \label{eq:sos 4}
\end{align}
\end{subequations}
and the following inequality holds:
\begin{equation}
B(c+d) - (N\delta+\frac{1}{\theta})\sum_{i=1}^k\frac{a_id^{i+1}}{i+1} \geq 0\label{eq_sos_pos}    
\end{equation}
where $l(x),p(x),q(x,u),r(x)$ are SOS, $\lambda_i(x)$ is a polynomial in $x$ for each $i=1,\ldots,m$, and $w_i(x,u)$ and $v_i(x,u)$ are SOS for each $i=1,\ldots,m$. Then $\mu(x)=\lambda(x)=[\lambda_1(x),\ldots,\lambda_m(x)]^\top, c,d$, and $\tau=\frac{1}{\theta}$ satisfy the conditions in Eqn. \eqref{eq:time derivative BC}.
\end{proposition}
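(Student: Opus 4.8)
The plan is to reduce the proposition to a purely certificate‑checking statement: show that each SOS (or scalar) hypothesis of Proposition~\ref{prop:sos} implies the corresponding inequality in \eqref{eq:time derivative BC}, so that $\mu=\lambda$, the same $c,d\geq 0$, and $\tau=1/\theta$ satisfy all of \eqref{eq:time derivative BC} and Theorem~\ref{thm:sufficient condition} applies unchanged. Only two tools are needed: (i) an SOS polynomial is nonnegative everywhere on its domain, combined with a generalized S‑procedure argument (a polynomial of the form ``global SOS $+$ SOS multipliers times the defining inequalities of a set'' is nonnegative on that set); and (ii) Assumption~\ref{assump:semi-algebraic}, which makes $L_1$ a polynomial and hence lets $\int_0^d L_1(s)\dd s$ be evaluated in closed form. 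No new dynamical or invariance arguments are required here — those already live in the proof of Theorem~\ref{thm:sufficient condition}.

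First I would treat \eqref{eq:sos 1}, \eqref{eq:sos 2}, and \eqref{eq:sos 3} one at a time. For \eqref{eq:sos 1}: the displayed expression, being SOS in the indeterminates $(x,u)$, is nonnegative for every $(x,u)\in\mathbb{R}^n\times\mathbb{R}^m$; restricting to $(x,u)\in\mathcal{D}\times\mathcal{U}$ we have $h_d(x)\geq 0$, $[u]_i-[u]_{i,min}\geq 0$, and $[u]_{i,max}-[u]_i\geq 0$, so each subtracted multiplier term $q(x,u)h_d(x)$, $w_i(x,u)([u]_i-[u]_{i,min})$, $v_i(x,u)([u]_{i,max}-[u]_i)$ is nonnegative (their multipliers being SOS), and moving them to the other side yields exactly \eqref{eq:time derivative BC 1}. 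The point that $u$ is a free indeterminate in the SOS program is what makes this a uniform (worst‑case‑over‑$\mathcal{U}$) bound. For \eqref{eq:sos 2}: nonnegativity of the SOS expression, restricted to the closure of $\mathcal{D}\setminus\mathcal{C}_1=\{x:h_d(x)\geq 0,\ h_c(x)\leq 0\}$, gives $\frac{\partial h_c}{\partial x}(x)[f(x)+g(x)\lambda(x)]-(c+d)\theta\geq l(x)h_d(x)-p(x)h_c(x)\geq 0$, using $l$ SOS with $h_d\geq 0$ for the first term and $p$ SOS with $-h_c\geq 0$ for the second; since $\tau=1/\theta$ this is \eqref{eq:time derivative BC 3}. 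The sign ``$+p(x)h_c(x)$'' in \eqref{eq:sos 2} is the one delicate point, and it is exactly right because $h_c(x)\leq 0$ on $\mathcal{D}\setminus\mathcal{C}_1$. For \eqref{eq:sos 3}: on $\mathcal{C}_1=\{x:h_c(x)\geq 0\}$ the term $r(x)h_c(x)$ is nonnegative, so $\frac{\partial h_c}{\partial x}(x)[f(x)+g(x)\lambda(x)]+\alpha(h_c(x))\geq r(x)h_c(x)\geq 0$, i.e.\ \eqref{eq:time derivative BC 4}. This tacitly needs $\alpha\circ h_c$ to be polynomial (e.g.\ $\alpha$ linear), which I would state as a standing requirement for the SOS program to be well posed.

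Next, \eqref{eq:sos 4} being SOS for every $i$ says $[u]_{i,min}\leq\lambda_i(x)\leq[u]_{i,max}$ for all $x$, hence $\lambda(x)\in\mathcal{U}$ for all $x$, so $\mu=\lambda$ is a legitimate policy $\mathcal{X}\to\mathcal{U}$ as required. Finally I would show \eqref{eq_sos_pos} is just \eqref{eq:time derivative BC 2} rewritten: by Assumption~\ref{assump:semi-algebraic} $L_1(s)=\sum_i a_i s^i$, so $\int_0^d L_1(s)\dd s=\sum_i\frac{a_i d^{i+1}}{i+1}$ (with the sum starting at $i=1$ under the normalization $L_1(0)=0$); substituting this and $\tau=1/\theta$ into \eqref{eq:time derivative BC 2} and multiplying through by $c+d>0$ gives $\big(N\delta+\tfrac{1}{\theta}\big)\sum_i\frac{a_i d^{i+1}}{i+1}\leq B(c+d)$, which is \eqref{eq_sos_pos}. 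Combining all of the above, $\mu=\lambda$, $c$, $d$, and $\tau=1/\theta$ satisfy \eqref{eq:time derivative BC}.

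\textbf{Main obstacle.} There is no deep obstacle: the argument is a textbook SOS/Positivstellensatz relaxation. The places needing care are bookkeeping rather than conceptual: getting the signs of the multiplier terms correct — in particular the ``$+p(x)h_c(x)$'' term, which works only because $h_c\leq 0$ on $\mathcal{D}\setminus\mathcal{C}_1$; treating $u$ as a free indeterminate so that \eqref{eq:sos 1} genuinely encodes the robust‑over‑$\mathcal{U}$ inequality \eqref{eq:time derivative BC 1}; and matching the polynomial expansion of $\int_0^d L_1$ term by term to the coefficients appearing in \eqref{eq_sos_pos}. One should also flag the mild implicit assumptions that $\alpha$ be polynomial and $L_1(0)=0$.
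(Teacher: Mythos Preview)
Your proposal is correct and follows essentially the same approach as the paper's proof: use nonnegativity of SOS expressions together with the signs of the defining inequalities on the relevant sets to recover each condition of \eqref{eq:time derivative BC}, and then evaluate $\int_0^d L_1(s)\dd s$ in closed form via Assumption~\ref{assump:semi-algebraic} to obtain \eqref{eq_sos_pos}. In fact your treatment is more detailed than the paper's, which spells out only the case of \eqref{eq:sos 1} and dismisses the rest as similar; your explicit handling of the sign of the $+p(x)h_c(x)$ term and the remark on $\alpha$ needing to be polynomial are useful additions.
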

\begin{proof}
Consider $x\in\mathcal{D}$ and $[u]_{i,min}\leq [u]_i\leq [u]_{i,max}$ for all $i=1,\ldots,m$. Then we have that $h_d(x)\geq 0$, $[u]_i-[u]_{i,min}\geq 0$, and $[u]_{i,max}-[u]_i\geq 0$. Since expression \eqref{eq:sos 1}, $q(x,u)$, $w_i(x,u)$, and $v_i(x,u)$ are SOS for all $i=1,\ldots, m$, therefore for all $(x,u)\in\mathcal{D}\times\mathcal{U}$ we can write 
\begin{multline*}
    \frac{\partial h_d}{\partial x}(x)[f(x)+g(x)u] +\frac{c+d}{N\delta} \geq  q(x,u)h_d(x)\nonumber\\
    +\sum_{i=1}^m(w_i(x,u)([u]_i-[u]_{i,min})+v_i(x,u)([u]_{i,max}-[u]_i))\geq 0. 
\end{multline*}
Thus condition \eqref{eq:time derivative BC 1} holds.

Expressions \eqref{eq:sos 2} to \eqref{eq:sos 4} can be proved similarly. Eqn. \eqref{eq_sos_pos} follows by computing the integrals in Eqn. \eqref{eq:time derivative BC 2}. Details are omitted due to space constraint.
\end{proof}
  \begin{center}
  	\begin{algorithm}[!htp]
  		\caption{Heuristic algorithm for computing $c$, $d$, $\tau$ and control policy $\mu(x)$}
  		\label{algo:max range}
  		\begin{algorithmic}[1]
  			\State \textbf{Input}: $f(x)$, $g(x)$, $B$, $\tau_{max}$, $c_{max}$, $\epsilon_1>0$, $\epsilon_2>0$ 
  			\State \textbf{Output:} $c$, $d$, $\tau$, $\lambda(x)$
  			\State \textbf{Initialization:} $c=0$.
  		    \While{$c\leq c_{max}$}
  		    \State $d=0$
            \While{$d \leq d_{max}$}
            \State Maximize ${\theta}$ subject to \eqref{eq:sos} with $c$ and $d$ fixed.
            \If{Eqn. \eqref{eq:sos} is feasible, \eqref{eq_sos_pos} is satisfied and $\frac{1}{\theta} < \tau_{max}$}
            \State \textbf{return} $d$, $c$, $\tau=\frac{1}{\theta}$, and $\lambda(x)$
            \Else
            \State $d=d+\epsilon_1$
            \EndIf
            \EndWhile
            \State $c=c+\epsilon_2$
            \EndWhile
  		\end{algorithmic}
  	\end{algorithm}
  \end{center}

Simultaneously searching for $\lambda(x), c, d$ and $\theta$ that satisfy Proposition \ref{prop:sos} leads to bilinearity in \eqref{eq:sos}. To this end, we propose an algorithm to compute $\lambda(x), c, d$ and $\theta$ that satisfy Proposition \ref{prop:sos}, as shown in Algorithm \ref{algo:max range}. Algorithm \ref{algo:max range} first initializes parameters $c=d=0$. At each iteration, the algorithm maximizes $\theta$ using the given $c$ and $d$. If some $\theta^*$ can be found in line 7 which satisfies the conditions in line 8 ($\tau_{max} = \infty$ if not specified), then the algorithm returns $c$, $d$, and set $\tau=\frac{1}{\theta^*}$ and $\mu(x)=\lambda(x)$. Otherwise, the algorithm increases the values of parameters $c$ and/or $d$ and repeat the search process for parameter $\theta$. Algorithm \ref{algo:max range} terminates at $c=c_{max}$ and $d=d_{max}$ if no feasible solution to \eqref{eq:sos} and \eqref{eq_sos_pos} is found, where $c_{max}=\sup_{x\in\mathcal{C}}h(x)$ and $d_{max}$ is the maximum value of $d$ that satisfies $\sum_{i=0}^kN\delta\frac{a_id^{i+1}}{i+1} \leq (c+d)B$.

Now we briefly characterize the convergence and complexity of Algorithm \ref{algo:max range}. Our intuition is that if there exists a feasible solution satisfying Eqn. \eqref{eq:sos} and \eqref{eq_sos_pos} strictly, then this solution must lie within the interior of the feasible solution set. Thus by choosing $\epsilon_1$ and $\epsilon_2$ appropriately small, the convergence of Algorithm \ref{algo:max range} can be guaranteed. We formalize this convergence result in the following proposition.

\begin{proposition} \label{convergence}

Suppose the set $\mathcal{C}$ is compact and $L_1$ is polynomial in $-h(x)$ with non-zero degree. Further assume that the feasible solution $(c,d)$ for \eqref{eq:sos} and \eqref{eq_sos_pos} satisfy inequality constraints in  \eqref{eq:sos} and \eqref{eq_sos_pos} strictly with $0 \leq c \leq c_{max}$, $0 \leq d \leq d_{max}$ and $0 < \tau \leq \tau_{max}$. Then Algorithm \ref{algo:max range} finds a feasible solution with $0 < \tau \leq \tau_{max}$ in finite number of iterations if $\epsilon_1$ and $\epsilon_2$ are chosen appropriately small.

\end{proposition}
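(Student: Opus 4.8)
The plan is to establish convergence of Algorithm \ref{algo:max range} by arguing that the grid search over $(c,d)$ eventually lands close enough to a strictly feasible point, and that strict feasibility is an open condition so nearby grid points remain feasible. First I would fix the hypothesized strictly feasible solution, call it $(c^*,d^*)$ with associated multiplier polynomials and control $\lambda^*(x)$, and with $\tau^* = 1/\theta^* \in (0,\tau_{max}]$, so that all the SOS expressions in \eqref{eq:sos} are \emph{strictly} positive on the relevant compact sets and the scalar inequality \eqref{eq_sos_pos} holds with strict slack. Since $\mathcal{C}$ is compact and $f,g,h$ are polynomial, the sets $\mathcal{D}\times\mathcal{U}$, $\mathcal{D}\setminus\mathcal{C}_1$, and $\mathcal{C}_1$ are contained in fixed compact sets once $c,d$ are bounded by $c_{max},d_{max}$, so the left-hand sides of \eqref{eq:sos 1}--\eqref{eq:sos 4} are uniformly continuous in the parameters $(c,d)$ (the only place $c,d$ enter is through the shifts $h_c,h_d$ and the constants $c+d$, $(c+d)/(N\delta)$), and likewise the expression in \eqref{eq_sos_pos} is continuous in $(c,d)$.

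The key steps, in order, are: (i) record that strict feasibility at $(c^*,d^*)$ means there is an $\eta>0$ such that each SOS expression exceeds $\eta$ pointwise on its domain and the slack in \eqref{eq_sos_pos} exceeds $\eta$; (ii) use uniform continuity to produce a radius $\rho>0$ such that for every $(c,d)$ with $|c-c^*|+|d-d^*|<\rho$, keeping the \emph{same} multiplier polynomials and the same $\lambda^*(x)$ (possibly with $\theta$ slightly decreased), all the expressions in \eqref{eq:sos} stay positive and \eqref{eq_sos_pos} stays satisfied --- hence $(c,d)$ is feasible in the sense tested by line 7--8 of the algorithm, with some $\theta \geq \theta^* - O(\rho) > 0$, i.e. $\tau \leq \tau_{max}$ after possibly shrinking $\rho$; (iii) observe that the grid traversed by the algorithm, namely $\{(i\epsilon_2, j\epsilon_1): i,j \in \mathbb{Z}_{\geq 0}\} \cap ([0,c_{max}]\times[0,d_{max}])$, contains a point within distance $\epsilon_1+\epsilon_2$ of $(c^*,d^*)$, so as soon as $\epsilon_1+\epsilon_2 < \rho$ (and $\epsilon_1,\epsilon_2 \le \rho/2$ say) the algorithm must hit a feasible grid point; (iv) bound the number of iterations by the finite grid size $\lceil c_{max}/\epsilon_2\rceil \cdot \lceil d_{max}/\epsilon_1 \rceil$, so termination is in finitely many iterations, and note that the returned $\tau = 1/\theta^* \le \tau_{max}$ by construction of the line-8 test. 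One subtlety to handle: line 7 \emph{maximizes} $\theta$, so the algorithm's $\theta$ at the hit grid point is at least the feasible $\theta$ we exhibited, which only helps (makes $\tau$ smaller); I would state this explicitly so the $\tau \le \tau_{max}$ guarantee is clean.

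The main obstacle I anticipate is step (ii): justifying that one can reuse the \emph{fixed} certificate polynomials $q,w_i,v_i,l,p,r$ and the fixed $\lambda^*$ while perturbing $c,d$. This is where compactness of $\mathcal{C}$ and the boundedness $c \le c_{max}$, $d \le d_{max}$ are essential --- they confine everything to a compact region on which a continuous function that is strictly positive has a positive lower bound, and on which the perturbation induced by changing $c,d$ (which appears polynomially, with bounded coefficients, via $h_c = h-c$, $h_d = h+d$) is uniformly small. I would be careful that decreasing $\theta$ slightly to absorb the perturbation in \eqref{eq:sos 2} is harmless because \eqref{eq_sos_pos} only becomes easier as $\theta$ decreases is \emph{false} --- in fact \eqref{eq_sos_pos} contains $N\delta + 1/\theta$, which \emph{increases} as $\theta$ decreases; so I must instead start from a strictly feasible $\theta^*$ with enough slack in \eqref{eq_sos_pos} to tolerate a small decrease, which is exactly what the strict-feasibility hypothesis grants. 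The hypothesis that $L_1$ has non-zero polynomial degree is used to ensure $d_{max}>0$ (so the inner loop is non-trivial) and that the $\sum a_i d^{i+1}/(i+1)$ term is a genuine polynomial in $d$ vanishing at $d=0$; I would note this where $d_{max}$ is invoked. The rest is routine bookkeeping, which I would not grind through.
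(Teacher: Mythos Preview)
Your proposal is correct and follows essentially the same route as the paper: strict feasibility of \eqref{eq:sos}--\eqref{eq_sos_pos} at some $(c^*,d^*)$ yields an open neighborhood of feasible pairs, a sufficiently fine grid over $[0,c_{max}]\times[0,d_{max}]$ must hit this neighborhood, and the grid is finite so termination is in finitely many steps. You supply considerably more detail than the paper does---in particular, your explicit justification that the \emph{same} certificate polynomials and $\lambda^*$ can be reused under perturbation of $(c,d)$, and your discussion of the $\theta$ subtlety in \eqref{eq_sos_pos}, are genuine refinements over the paper's terse ``there exists an interval $\mathcal{I}$ with non-zero measure''.

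One small correction: you say the non-zero-degree hypothesis on $L_1$ is used to ensure $d_{max}>0$. In fact the paper uses it to ensure $d_{max}<\infty$: if $L_1$ has degree $k\ge 1$ then $\sum_i N\delta\, a_i d^{i+1}/(i+1)$ has degree $k+1\ge 2$ in $d$ while $(c+d)B$ is linear, so the defining inequality for $d_{max}$ fails for large $d$. Finiteness of $d_{max}$ (together with $c_{max}=\sup_{x\in\mathcal{C}}h(x)<\infty$ from compactness of $\mathcal{C}$) is what makes the grid finite and hence the iteration count bounded; you invoke this bound in step~(iv) but should attribute it to the right hypothesis.
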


\begin{proof}

Since $\mathcal{C}$ is compact, we have $c_{max}=\sup_{x\in\mathcal{C}}h(x)<\infty$. Also, since the order of $d$ in $\sum_{i=0}^kN\delta\frac{a_id^{i+1}}{i+1}$ is greater than that of $(c+d)B$ and $L_1$ is non-negative, therefore $d_{max}<\infty$. Using $c_{max},d_{max}<\infty$, we have that Algorithm \ref{algo:max range} terminates in finite number of iterations. 

Let $(c,d)$ satisfy \eqref{eq:sos} and \eqref{eq_sos_pos} strictly with $0 \leq c \leq c_{max}$, $0 \leq d \leq d_{max}$, and $0 < \tau \leq \tau_{max}$. Therefore there exists an interval $\mathcal{I}\subseteq\mathbb{R}^2$ for $(c,d)$ with non-zero measure for which \eqref{eq:sos} and \eqref{eq_sos_pos} are feasible and $ c \in[0, c_{max}]$, $d \in[0, d_{max}]$, $ \tau \in(0,\tau_{max}]$. Denote the length of $\mathcal{I}$ in $c$ and $d$ as $\tilde{c}>0$ and $\tilde{d}>0$, respectively. Let $\epsilon_1 \in(0, \tilde{c})$ and $\epsilon_2 \in(0, \tilde{d})$. Then Algorithm \ref{algo:max range} will always terminate with a feasible solution. Otherwise interval $\mathcal{I}$ contains some infeasible solutions to Eqn. \eqref{eq:sos} and \eqref{eq_sos_pos}, which contradicts its definition.
\end{proof}

By the proof of Proposition \ref{convergence}, the computational complexity of Algorithm \ref{algo:max range} is $\lfloor \frac{c_{max}} {\epsilon_1} \rfloor \lfloor \frac{d_{max}} {\epsilon_2} \rfloor M$, where $M$ is the computational complexity of line 7 in Algorithm \ref{algo:max range}. 

\section{Case Studies}\label{sec:simulation}

This section presents a case study on lateral control of a Boeing 747. The lateral dynamics for a Boeing 747 (at Mach 0.8 and 40000ft) \cite{franklin2002feedback} are given as $\dot{x}=f(x)+g(x)u$, where
\begin{equation*}
    f(x)=\begin{bmatrix}
    -0.0558 & -0.9968 & 0.0802 & 0.0415\\
    0.598 & -0.115 & -0.0318 & 0\\
    -3.05 & 0.388 & -0.465 & 0\\
    0 & 0.0805 & 1 & 0
    \end{bmatrix},
\end{equation*}
$g(x)=[0.00729,-0.475,0.153,0]^\top$, $x\in\mathbb{R}^4$ with $x_1,x_2,x_3$, and $x_4$ representing the side-slip angle, yaw rate, roll rate, and roll angle, respectively. The cyber subsystem updates the control signal with frequency $20Hz$. The aircraft aims at maintaining the yaw rate $x_2$ within $\mathcal{C}=\{x:h(x)\geq 0\}$ for passengers' comfort and minimizing potential damage to baggage, where $h(x)=0.025^2-x_2^2$. Set $\mathcal{C}$ is represented as the white region in Fig. \ref{fig:yaw rate}. We set $x_0=[0.01,0.025,0,0]^\top$, $B=0.02$, and study the following three scenarios.

\emph{\underline{Scenario I: The aircraft has redundant controllers.}} We choose parameter $N=N_1+N_2$ epochs during which the aircraft is either corrupted or under restoration, where $N_1=N_2=2$ \cite{mertoguno2019physics}. We consider the system is equipped with a buffer of length $3$. Using Algorithm \ref{algo:max range}, we obtain that $c=0$, $d=0.4$, and $\tau=0.16s$, indicating that we need $4$ epochs for the system to be at location $l_1=normal$ after restoration. We let the length of each attack cycle be $8$ epochs. The control policy given by Algorithm \ref{algo:max range} is $u=Kx$ with $K=[- 9.231\times 10^{-3}, 0.503, - 1.805\times 10^{-3}, 2.373\times 10^{-5}]$. We depict the trajectory of the yaw rate over $150$ epochs using the black solid line in Fig. \ref{fig:yaw rate}. The non-smoothness in the trajectory is due to switching between location $normal$ (the controller is available) and locations $corrupted$ and $restoration$ (the controller is unavailable). The adversary enforces the yaw rate to exceed $0.025$ from the second to fifth epoch with cost $0.0038 < B$.

\begin{figure}[!htp]
    \centering
    \includegraphics[scale = 0.45]{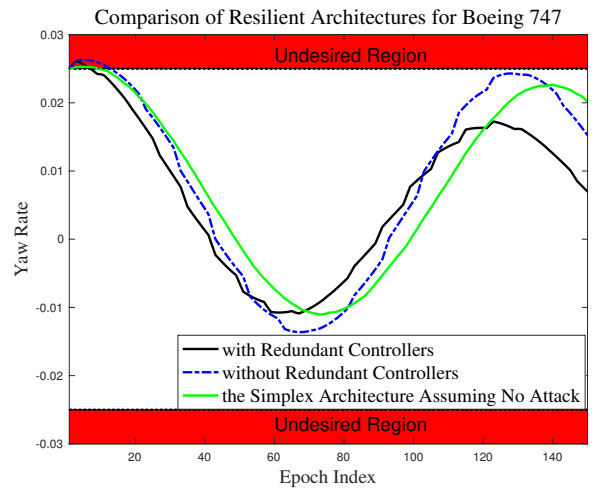}
    \caption{The yaw rate of a Boeing 747 adopting different architectures over $150$ epochs. Set $\mathcal{C}=\{x_2:0.025^2-x^2\geq 0\}$ represented by the white region. The black solid line depicts the yaw rate of an aircraft with redundancy. The blue dash-dotted line is the yaw rate of an aircraft without redundancy. The green solid line describes the yaw rate of an aircraft using the simplex architecture without attack. The safety controller is invoked when $0.025^2-x_2^2< 0$.}
    \label{fig:yaw rate}
\end{figure}

\emph{\underline{Scenario II: The aircraft has no redundant controller.}} In this case, the aircraft restarts the controller to recover the system. We let $N=N_3+N_4$ with $N_3=2$ and $N_4=4$. In this case, we let the adversary attack every $10$ epochs. Algorithm \ref{algo:max range} gives that $c=0$, $d=0.4$, $\tau = 0.18s$ (i.e., 4 epochs), and $u=Kx$ with $K=[0.03017,0.05395,- 4.753\times 10^{-3},7.513\times 10^{-5}]^\top$. The evolution of yaw rate over $150$ epochs is plotted using blue dash-dotted line in Fig. \ref{fig:yaw rate}. The adversary enforces the yaw rate to exceed $0.025$ from the second to eleventh epoch with cost $0.0104 < B$.

\emph{\underline{Scenario III: The aircraft adopts simplex architecture.}} We assume that the main controller is in a faulty condition and produces random control input $u_t\in\mathcal{U}$ for each epoch. The safety controller is invoked once $h(x)< 0$. Once the yaw rate exceeds $0.025$ (from the first to seventh epoch in Fig. \ref{fig:yaw rate}), the safety controller drives the yaw rate to $\mathcal{C}$. Note that the simplex architecture assumes that there exists no adversary, which is different with Scenario I and II.

Therefore, the control policy computed using our proposed algorithm ensures safety of the system with respect to specified budget for any of the CRAs or the simplex architecture.
\section{conclusion}\label{sec:conclusion}

In this paper, we studied the problem of developing a common framework that allows safety analysis and control synthesis of CPS adopting the simplex architecture or the set of cyber resilient architectures including BFT++. We presented the models for cyber and physical subsystems, and formulated the safety property using a budget constraint. Our formulation captures strict safety constraint as a special case. We constructed a hybrid system that models CPS implementing any of these architectures. We derived a set of sufficient conditions for the control policy to satisfy the budget constraint. We translated the conditions into a set of sum-of-squares constraints, and proposed an algorithm to compute the control policy. We analyzed the convergence and complexity of the algorithm. A case study on the lateral control of a Boeing 747 was presented to demonstrate viability of our proposed framework.

\bibliographystyle{IEEEtran}
\bibliography{MyBib}

\end{document}